\newcounter{mycounter}
\newcommand{\Z}{\mathbb{Z}}
\newcommand{\rbrackets}[1]{\left( #1 \right)}
\newcommand{\sbrackets}[1]{\left[ #1 \right]}
\newcommand{\cbrackets}[1]{\left\{ #1 \right\}}
\newcommand{\Oh}[1]{O\!\rbrackets{#1}}
\newcommand{\Thetah}[1]{\Theta(#1)}
\newcommand{\Omegah}[1]{\Omega(#1)}
\newcommand{\floor}[1]{\lfloor #1 \rfloor}
\newcommand{\ceil}[1]{\lceil #1 \rceil}
\newcommand{\eps}{\varepsilon}
\newcommand{\domimg}{: \Z^+ \rightarrow \Z^+}
\newcommand{\rep}[1]{\expandafter\MakeUppercase\expandafter{#1}} 
\newcommand{\sumof}[1]{#1^{\leq}} \newcommand{\tildes}[1]{\tilde{\sumof{#1}}}
\newcommand{\shift}[2]{#1|_{#2}} 
\newcommand{\C}{C}
\newcommand{\knap}{k}
\newcommand{\Knap}{\rep{\knap}}
\newcommand{\apx}{\varepsilon}
\newcommand{\sparapx}{\delta}
\newcommand{\weight}{w}
\newcommand{\wset}{W}
\newcommand{\hminus}{\sqrt{n\eps}}
\newcommand{\hplus}{\sqrt{n/\eps}}
\newcommand{\di}{\delta_i}
\newcommand{\divalue}{\frac{\eps^{3/4}}{2c \cdot 2^{i/2} \cdot n^{1/4} }}
\newcommand{\countingknapsack}{{\sc \#Knapsack}}
\newcommand{\knapsack}{{\sc Knapsack}}
\newcommand{\Simpletime}{$O(n^3 \eps^{-1})$ }
\newcommand{\Simplespace}{$O(n^2 \eps^{-1})$}
\newcommand{\FPTAStime}{$O(n^{2.5}\varepsilon^{-1.5}\log(n \eps^{-1})\log (n \eps))$ }
\newcommand{\FPTASspace}{$O(n^{1.5}\varepsilon^{-1.5})$ }
\newcommand{\Integertime}{$O(n^{2.5}\varepsilon^{-1.5}\log(n\eps^{-1} \log U)\log (n \eps) \log^2 U)$ }
\newcommand{\Integerspace}{$O(n^{1.5}\varepsilon^{-1.5}\log U)$}
\DeclareMathOperator{\successor}{succ}
\newtheorem{lemma}{Lemma}
\newtheorem{theorem}{Theorem}
\newtheorem{definition}{Definition}
\renewcommand{\theenumi}{\arabic{enumi}}
\renewcommand{\p@enumii}{\theenumi.}
\begin{document}

\title{A Faster FPTAS for \#Knapsack}

\author{Pawe\l{} Gawrychowski\thanks{University of Wrocław, Poland.}
    \and
  Liran Markin\thanks{University of Haifa,  Israel. Partially supported by Israel Science Foundation grant 592/17.}
  \and
  Oren Weimann$^\dagger$
}

\date{}
\maketitle

\begin{abstract}
Given a set $\wset = \{w_1,\ldots, w_n\}$ of non-negative integer weights and an integer $\C$, the \countingknapsack\ problem asks to count the number of distinct
subsets of $\wset$ whose total weight is at most $\C$. In the more general integer version of the problem,  the subsets are multisets. That is, we are also given a set $ \{u_1,\ldots, u_n\}$ and we are allowed to take up to $u_i$ items of weight $w_i$.

We present a deterministic FPTAS for \countingknapsack\ running in \FPTAStime time.
The previous best deterministic algorithm [FOCS 2011] runs in $O(n^3 \varepsilon^{-1} \log(n\eps^{-1}))$ time (see also  [ESA 2014] for a  logarithmic factor improvement). The previous best randomized algorithm [STOC 2003] runs in $O(n^{2.5} \sqrt{\log (n\eps^{-1}) } + \varepsilon^{-2} n^2 )$ time. Therefore, in the natural setting of constant $\eps$, we close the gap between the $\tilde O(n^{2.5})$ randomized algorithm and the $\tilde O(n^3)$ deterministic algorithm.   

For the integer version with $U = \max_i \cbrackets{u_i}$, we present a deterministic FPTAS running in \Integertime time. The previous best deterministic algorithm [APPROX 2016] runs in $O(n^3\varepsilon^{-1}\log(n \eps^{-1} \log U) \log^2 U)$ time.
\end{abstract}

\section{Introduction}\label{sec:intro}

Given a set $\wset = \{w_1,\ldots, w_n\}$ of non-negative integer weights and an integer $\C$, the \countingknapsack\ problem asks to count the number of distinct
subsets of $\wset$ whose total weight is at most $\C$. This problem is the counting version of the well known \knapsack\ problem and is \#P-hard. 
While there are many, celebrated, randomized polynomial-time algorithms for approximately counting \#P-hard
problems, the \countingknapsack\ problem is one of the few examples where there is also a deterministic approximation algorithm (other notable examples are~\cite{halman2009fully,Wei06,BGK07}).

From a geometric view, the \countingknapsack\ problem is equivalent to finding the number of
vertices of the $n$-dimensional hypercube that lie on one side of a given $n$-dimensional hyperplane. 
The problem is also related to 
pseudorandom generators for
halfspaces (see e.g.~\cite{PRG1,PRG2,prg3}) as these imply deterministic (though not polynomial-time) approximation schemes for
\countingknapsack\ by enumerating
over all input seeds to the generator.  

\paragraph{Approximately counting knapsack solutions.}
The \countingknapsack\ problem can be solved with the following simple recursion: $S(i,j) = S(i-1,j)+S(i-1,j-w_i)$ where $S(i, j)$ is the number of subsets of $\{w_1,...,w_i\}$ whose weight sums to at most $j$. This recurrence immediately implies a pseudo-polynomial $O(n \C)$ time algorithm. More interestingly, this recurrence is the basis of all existing fully polynomial-time approximation schemes (FPTAS).   
That is, algorithms that for any $\varepsilon > 0$ estimate the
number of solutions to within relative error $(1 \pm \varepsilon)$ in time
polynomial in $n$ and in $1/\varepsilon$. 

Dyer et al.~\cite{dyer1993mildly} were the first to show how to approximate this recurrence with random sampling. They gave a randomized sub-exponential $2^{O(\sqrt{n}\log^{2.5} n))}\epsilon^{-2}$ time algorithm. 
Using a more complicated random sampling (with a rapidly mixing Markov chain), Morris and Sinclair~\cite{morris2004random} obtained the first FPRAS (fully-polynomial {\em randomized} approximation scheme) running in  $O(n^{4.5+\epsilon} + \epsilon^{-2} n^2)$  time. 
 Dyer~\cite{dyer2003approximate} further improved this to $O(n^{2.5} \sqrt{\log (n\eps^{-1}) } + \epsilon^{-2} n^2 )$ by using a surprisingly simple sampling procedure (combined with randomized rounding). This to date is the fastest known randomized solution.  As for deterministic solutions, the fastest solution to date is by Rizzi and Tomescu~\cite{rizzi2014faster} and runs in  $O(n^3 \epsilon^{-1} \log \epsilon^{-1} / \log n)$ time. It is a logarithmic factor improvement (obtained by discretizing the recursion $S(i,j)$ with floating-point arithmetics) over the previous fastest  $O(n^3 \epsilon^{-1} \log(n\eps^{-1}))$ time solutions of Gopalan et al.~\cite{gopalan2011fptas} (who used read-once branching programs inspired by related work on pseudorandom generators for halfspaces~\cite{PRG2}) and of {\v{S}}tefankovi{\v{c}} et al.~\cite{vstefankovivc2012deterministic} (who approximated a ``dual'' recursion $S^*(i,j)$ defined as the smallest capacity $c$ such that there exist at least $j$ subsets of $\{w_1, \ldots , w_i\}$ with weight $c$).

\paragraph{Approximately counting integer knapsack solutions.}
In the more general integer version of \countingknapsack, the subsets are multisets. That is, in addition to $\wset = \{w_1,\ldots, w_n\}$ we are also given a set $ \{u_1,\ldots, u_n\}$ and we are allowed to take up to $u_i$ items of weight $w_i$. 

The first (randomized) FPRAS for counting integer knapsack solution was given by Dyer~\cite{dyer2003approximate} who  presented a strongly polynomial $O(n^5 +n^4 \varepsilon^{-2})$ time algorithm. 
A (deterministic) FPTAS for this problem
was then given by Gopalan et al.~\cite{gopalan2011fptas} with a running time of $O(n^5\varepsilon^{-2} \log^2 U \log w)$ (see also~\cite{gopalan2010polynomial}) where $U = \max_i \cbrackets{u_i}$ and $w=\sum_i{w_iu_i}+C$.
The fastest solution to date is by Halman~\cite{halman2016deterministic} with a running time of $O(n^3\varepsilon^{-1}\log(n \eps^{-1} \log U) \log^2 U)$.

\paragraph{Our results.}

In this paper we present improved algorithms for both \countingknapsack\ and its integer version. Our algorithms improve the previous best algorithms by polynomial factors. 
For constant $\eps$, we close the gap between the $\tilde O(n^{2.5})$ randomized and the $\tilde O(n^3)$ deterministic running times.
More formally, with the standard assumption of constant time arithmetics on the input numbers, we prove the following two theorems: 

\begin{theorem}
    There is a FPTAS running in \FPTAStime time and \FPTASspace space for counting knapsack solutions.
\end{theorem}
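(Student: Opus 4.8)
The plan is to build on the ``dual'' dynamic program of {\v{S}}tefankovi{\v{c}} et al.~\cite{vstefankovivc2012deterministic}. Instead of the number $S(i,j)$ of subsets of $\{w_1,\dots,w_i\}$ of weight at most $j$, that DP stores, for each prefix $i$ and each target count $a$, the minimum capacity $T(i,a)$ admitting at least $a$ such subsets; it obeys $T(i,a)=\min_{a_1+a_2\ge a}\max\{T(i-1,a_1),\,w_i+T(i-1,a_2)\}$, and one keeps only the values of $a$ on a geometric grid $\{\lceil(1+\delta)^s\rceil:s\ge 0\}$, so that $O(\log_{1+\delta}2^n)=O(n/\delta)$ values of $a$ are stored per prefix. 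Because rounding onto this grid costs one factor $1+\delta$ per prefix, the usual analysis forces $\delta=\Theta(\eps/n)$, hence $\Theta(n^2/\eps)$ grid points per prefix and the $\tilde O(n^3/\eps)$ running time of the previous algorithms. The aim is to shrink the number of grid points per prefix to $O(n^{1.5}\eps^{-1.5})$, i.e.\ by a factor $\hminus=\sqrt{n\eps}$, after which multiplying by $n$ prefixes and by the per-transition cost yields the claimed bound.

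First I would peel off the regime $\eps<\epslow$ for a threshold $\epslow=\Theta(1/n)$: there $\hminus\le 1$, so $n^3\eps^{-1}\le n^{2.5}\eps^{-1.5}$ and the previous algorithm already meets the target. In the main regime $\eps\ge\epslow$ the idea is to discretize the count axis non-uniformly. The smallest counts --- those that take part in all $n$ prefixes --- keep a fine precision $\Theta(\eps/n)$, but a count of magnitude around $2^i$ is represented only within a factor $1+\di$ with $\di\approx\divalue$, which is very coarse in the top band; the thresholds $\hminus$ and $\hplus=\sqrt{n/\eps}$ delimit the three count ranges $[1,2^{\hminus}]$, $(2^{\hminus},2^{\hplus}]$, $(2^{\hplus},2^n]$ that are treated with these different granularities. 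For the answer itself one makes a case analysis on its magnitude relative to $2^{\hminus}$ and $2^{\hplus}$ --- capping the stored count grid, or passing to the complementary count $2^n-(\text{answer})$, as appropriate --- so that only $O(n^{1.5}\eps^{-1.5})$ count-levels are ever needed. A careful count of the resulting grid points, summed over the $O(\log(n/\eps))$ relevant scales and using the a priori bound that a prefix of length $\ell$ has at most $2^\ell$ subsets, gives $O(n^{1.5}\eps^{-1.5})$ grid points per prefix; with $O(\log(n\eps^{-1})\log(n\eps))$ work per DP transition --- a merge of the two sorted breakpoint lists coming from the two terms of the recursion, plus a binary search to place each new grid point, the $\argmin$ being found in one sweep by monotonicity of $T(i-1,\cdot)$ --- this gives the stated time, and storing only the previous and current prefix gives the stated $O(n^{1.5}\eps^{-1.5})$ space.

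The step I expect to be the main obstacle is the error analysis: showing that, in spite of the very coarse top band, the output is still within a factor $1\pm\eps$ of the truth. The naive ``one factor $1+\di$ per prefix'' accounting is hopeless with such a large $\di$; the point must be that a rounding carried out while the running count has magnitude $\approx 2^i$ affects the final relative error only after the count has grown through many further doublings, so that the per-scale error contributions form a rapidly convergent series rather than $\Theta(n)$ equal terms --- and the exponent $1/2$ in $\di$, together with the split at $\hminus$ and $\hplus$, is calibrated so that this series is $O(\eps)$ exactly in the range $\eps\ge\epslow$. Turning this into a rigorous invariant maintained by the DP, and checking that the non-uniform grid is closed under the recursion so that the merge step is well defined, is where the real work lies; the time and space bounds then follow from the counting above.
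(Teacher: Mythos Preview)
Your approach is not the paper's, and as you yourself flag, the error analysis is a genuine gap rather than a detail.

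The paper does \emph{not} work prefix by prefix. Its key structural idea is a divide-and-conquer on the item set together with the convolution identity $\knap_{S\cup T}=\knap_S\ast\knap_T$ (Lemma~\ref{lem:fptas}). For $|S|>\hplus$ it splits $S$ into two halves, recursively obtains sum approximations $\Knap_A,\Knap_B$, convolves them, and sparsifies with a level-dependent parameter $\di=\divalue$ where $i$ is the \emph{recursion depth}. For $|S|\le\hplus$ it falls back to the $O(|S|^3\delta^{-1})$ simple algorithm with $\delta=\delta_{\log(\hminus)}=\Theta(\sqrt{\eps/n})$. The error analysis is then just a product over the $\log(\hminus)$ levels: convolution squares the approximation factor and sparsification multiplies by $(1+\di)$, giving $(1+\eps_0)=\prod_{i=0}^{\log(\hminus)}(1+\di)^{2^i}$, which is at most $e^{\eps/2}\le 1+\eps$ by a short calculation. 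The time bound follows because each level contributes $\tilde O(n^{2.5}\eps^{-1.5})$ and there are $\log(\hminus)=O(\log(n\eps))$ levels.

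Your plan keeps the sequential dual DP $T(i,a)$ and tries to compensate for its $n$ rounds of rounding by making the count grid non-uniform. The heuristic you offer---that a rounding performed when the count is near $2^j$ is later diluted by ``many further doublings''---does not match how the dual recursion actually uses the table. To compute $T(i,a)$ you minimise over splits $a_1+a_2\ge a$, and the optimal split may have one of $a_1,a_2$ very small; those small-count entries are consulted at \emph{every} prefix, so their rounding error is re-injected $\Theta(n)$ times rather than amortised over a logarithmic number of doublings. There is no mechanism in the dual DP that converts the per-step multiplicative loss into the geometric series you need. (Note also that your prose says the top band is ``very coarse'' while the formula $\di=\divalue$ you quote is \emph{decreasing} in $i$; in the paper that index is recursion depth, not log-count, and transplanting it to a per-magnitude grid reverses its meaning.)

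In short, the missing idea is precisely the one the paper supplies: replace the $n$-step sequential accumulation by a balanced binary recursion with convolution, so that the number of rounding events along any root-to-leaf path is $O(\log(\hminus))$ and the squaring from convolution is what produces the convergent product. Without that, your non-uniform grid does not by itself yield a $(1+\eps)$ guarantee, and the sketch as written is not a proof.
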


\begin{theorem}
    There is a FPTAS running in \Integertime time and \Integerspace\ space for counting integer knapsack solutions. 
\end{theorem}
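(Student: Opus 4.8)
The plan is to reduce the integer version to a small extension of the data structure and algorithm that prove Theorem~1. Recall that the latter processes the items one at a time, maintaining a compact \emph{approximate} representation of the function $c\mapsto\#\{\text{subsets of }\{w_1,\dots,w_i\}\text{ of total weight}\le c\}$: its values are rounded to powers of $1+\di$ so that the representation has only $\tilde O(n^{1.5}\eps^{-1.5})$ breakpoints, and incorporating item $i$ amounts to one ``shift‑and‑add'' update $g(c)\leftarrow g(c)+g(c-w_i)$ followed by a re‑rounding. I would keep exactly this scaffolding; the only ingredient that must change is how a single item is incorporated.

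In the integer setting, incorporating item $i$ means convolving the running count with $1+z^{w_i}+\dots+z^{u_iw_i}$ rather than with $1+z^{w_i}$. First I would use the identity
\[
  1+z^{w_i}+\dots+z^{u_iw_i}=\frac{z^{(u_i+1)w_i}-1}{z^{w_i}-1},
\]
and, writing $m=u_i+1$, evaluate $\sum_{t=0}^{m-1}z^{tw_i}$ by the Horner‑type recursion
$\sum_{t=0}^{m-1}z^{tw_i}=\bigl(1+z^{\lfloor m/2\rfloor w_i}\bigr)\sum_{t=0}^{\lfloor m/2\rfloor-1}z^{tw_i}+[m\text{ odd}]\,z^{(m-1)w_i}$.
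This expresses the required convolution as a sequence of $O(\log u_i)$ elementary operations, each of which is either a shift of the current function by a multiple of $w_i$ or the pointwise sum of two already‑computed functions — precisely the two primitives the Theorem~1 data structure already supports, and each of them count‑preserving, so we never resort to the naive binary splitting of item $i$ into weights $w_i,2w_i,4w_i,\dots$, which would distort the count. Over all items this uses $O\bigl(\sum_i(1+\log u_i)\bigr)=O(n\log U)$ such operations instead of $O(n)$; a weight $w_i=0$ is a degenerate case handled by a single scalar multiplication by $u_i+1$.

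Next I would redo the size and error accounting. The running count can now be as large as $\prod_i(u_i+1)\le(U+1)^n$, so the rounded representation spans $O\bigl(\log((U+1)^n)\bigr)=O(n\log U)$ multiplicative levels rather than $O(n)$; feeding this through the analysis of Theorem~1 multiplies the number of breakpoints — hence the space, and the cost of each elementary operation — by a $\Theta(\log U)$ factor, giving representation size $\tilde O(n^{1.5}\eps^{-1.5}\log U)$. Combined with the $\Theta(\log U)$ extra operations this multiplies the running time by $\Theta(\log^2 U)$ and turns the $\log(n\eps^{-1})$ factor into $\log(n\eps^{-1}\log U)$, matching the claimed bounds. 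For the error, shifts and pointwise sums are exact, so the only loss comes from re‑roundings, whose relative errors multiply; it therefore suffices to spread the error budget over the $O(n\log U)$ re‑roundings exactly as in the proof of Theorem~1, with the per‑level precisions $\di=\divalue$ rescaled (replacing $n$ by $\Theta(n\log U)$) so that the rounding losses still telescope to $\sum_i\di\le\eps$.

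The main obstacle I anticipate is not the reduction but keeping this blow‑up polylogarithmic in $U$. A direct implementation re‑rounds after every one of the $O(\log u_i)$ sub‑operations for item $i$, and one must argue simultaneously (i) that the representation stays of size $\tilde O(n^{1.5}\eps^{-1.5}\log U)$ throughout those sub‑operations — i.e.\ that a shift‑and‑add at most doubles the breakpoint count before the re‑rounding collapses it back, rather than letting it grow like $u_i$ — and (ii) that the accumulated multiplicative error after all $\Theta(n\log U)$ re‑roundings is still $(1\pm\eps)$. Arranging the precision schedule so that it discharges both obligations at once is where the bookkeeping has to be done with care; the remainder is a direct lift of the argument behind Theorem~1.
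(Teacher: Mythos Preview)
Your approach is genuinely different from the paper's. You propose handling the multiplicity $u_i$ by binary doubling: expand $1+z^{w_i}+\cdots+z^{u_iw_i}$ via $O(\log u_i)$ shift-and-add steps, each followed by a re-rounding. The paper instead introduces a new primitive, \textsc{InsertAndSparsify}, that in a \emph{single} pass computes the sparsification of $K_S+K_S|_w+\cdots+K_S|_{u\cdot w}$ directly: it sweeps the $2|K_S|$ ``start'' and ``end'' events while maintaining a balanced BST keyed by positions modulo $w$, so that between consecutive events it can compute the accumulated value and locate breakpoints with prefix/suffix-sum queries. Thus the paper incurs exactly one sparsification per item, independent of $u_i$, whereas you incur $O(\log u_i)$.

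Your route is more elementary (it reuses only the existing shift/sum/sparsify primitives and needs no new data structure), but it pays for this in the error budget. Because the bottom level now performs $\Theta(m\log U)$ re-roundings on a size-$m$ subproblem instead of $m$, the per-step precision there must shrink by a $\log U$ factor to keep the telescoped error bounded; this inflates the representation by another $\log U$, and together with the $\log U$-fold increase in operations the bottom-level cost picks up a $\log^3 U$ factor rather than the $\log U$ the paper gets from \textsc{InsertAndSparsify}. Your blanket ``multiply the running time by $\Theta(\log^2 U)$'' is correct for the convolution levels (where only the representation size changes), but at the leaf level it undercounts: without re-tuning the threshold between the convolution phase and the simple phase, the stated bound of Theorem~2 is off by a $\log U$ factor. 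The paper's sweep-with-BST trick is precisely what avoids this extra loss. Also note that your suggestion to ``replace $n$ by $\Theta(n\log U)$'' in all the $\delta_i$ is unnecessarily aggressive: the upper convolution levels still do one sparsification each, so only the bottom-level precision needs adjusting.
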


Our algorithm 
is the first algorithm to deviate from the standard recursion. In particular, on  large enough sets, instead of recursing on all but the last item, we recurse in the middle and use convolution to merge the two sub-solutions. This requires   
extending the recent technique of {\em K-approximation sets and functions} used by Halman~\cite{halman2016deterministic} and introduced in~\cite{halman2009fully}. 
Our extended technique (which we call {\em sum approximations}) is  simple to state and leads to a surprisingly simple solution to \countingknapsack\ with an improved running time.
In a nutshell, for any function $f \domimg$ (think of $f(x) = $ the number of subsets with total weight exactly $x$) let $\sumof{f}$ denote the function  $\sumof{f}(x) = \sum_{y \leq x} f(y)$ (hence $\sumof{f}(x) = $ the number of subsets with total weight at most $x$). Then, in order to approximate the function $\sumof{f}$ it is enough to find any function $F$ such that $\sumof{F}$ approximates $\sumof{f}$.

We examine the properties of such sum approximations $F$ in Section~\ref{sec:apx}, and introduce a number of useful computational primitives on sum approximations. With these primitives in hand, we give a simplified version of Halman's algorithm for \countingknapsack\ in Section~\ref{sec:simple}.
Then, in Section~\ref{sec:fptas} we present an improved divide and conquer algorithm based on convolutions of sum approximations.
Finally, in Section~\ref{sec:integer} we adapt our algorithm to the integer version, where every item has a corresponding multiplicity.
Instead of the {\em binding constraints} approach used by Halman~\cite{halman2016deterministic}, we show that it is enough to perform a single scan of a sum approximation using nothing more than a standard binary search tree.

\section{Approximation of a Function}\label{sec:apx}

 Consider the following two functions: 
$f(x) = $ the number of subsets with total weight exactly $x$, and $\sumof{f}(x) = $ the number of subsets with total weight at most $x$.  
More generally:    
 
\begin{definition}
	Given a function $f \domimg$ we define the function $\sumof{f}(x)$ as $$\sumof{f}(x) = \sum_{y \leq x} f(y).$$
\end{definition}

Our goal is to approximate $\sumof{f}(C)$ but we will actually approximate the entire function $\sumof{f}(x)$ for all $x$. We now describe what it means to approximate a function and present some properties of such approximations.

\begin{definition}[$(1+\apx)$-approximation of a function]

	Given a function $f \domimg$ and a parameter $\apx > 0$, a function $\rep{f} \domimg$  is a \emph{$(1+\apx)$-approximation of $f$} if for every $x$,
	\begin{equation*}
		f(x) \leq \rep{f}(x) \leq (1+\apx)f(x).
	\end{equation*}

\end{definition}

\begin{definition}[$(1+\apx)$-sum approximation of a function]\label{def:sumapx}

	Given a function $f \domimg$ and a parameter $\apx > 0$, a function $\rep{f} \domimg$ is a \emph{$(1+\apx)$-sum approximation of $f$} if $\sumof{\rep{f}}$ is a $(1+\apx)$-approximation of $\sumof{f}$.
\end{definition}

\noindent We next examine some useful properties of sum approximations. For a function $f \domimg$ define its {\em shift} by $w$ as follows:
\[\shift{f}{w}(x) = \begin{cases}
f(x-w), & x \geq w,\\
0 & x<w,
\end{cases}\]
and for two functions $f,g \domimg$ we define their {\em convolution} to be:
\[(f \ast g)(w) = \sum\limits_{x+y = w} f(x)g(y).\]

\begin{lemma}[operations on sum approximations] \label{lem:apxmath}
	Let $\rep{f}$ be a $(1+\apx)$-sum approximation of $f$ and $\rep{g}$ be a $(1+\apx)$-sum approximation of $g$, then the following properties hold:
	\begin{description}[leftmargin=!,labelwidth=\widthof{\bfseries Approximatio}]
		\item[Approximation:]\mbox{A $(1+\sparapx)$-sum approximation of $\rep{f}$ is a $(1+\sparapx)(1+\apx)$-sum approximation of $f$.}		\item[Summation:]$(\rep{f}+\rep{g})$ is a $(1+\apx)$-sum approximation of $(f+g)$.
		\item[Shifting:]$\shift{\rep{f}}{w}$ is a $(1+\apx)$-sum approximation of $\shift{f}{w}$ for any $w > 0$.		
		\item[Convolution:]$(\rep{f} \ast \rep{g})$ is a $(1 + \eps)^2$-sum approximation of $(f \ast g)$. 
	\end{description}
\end{lemma}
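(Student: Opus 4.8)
The plan is to treat the four properties separately; three of them are routine once one unwinds the definition of $\sumof{(\cdot)}$, and essentially all the work is concentrated in the convolution bound. Throughout I use the defining sandwich inequalities $\sumof f(x)\le\sumof{\rep f}(x)\le(1+\apx)\sumof f(x)$ and $\sumof g(x)\le\sumof{\rep g}(x)\le(1+\apx)\sumof g(x)$, valid for every $x$, together with the elementary facts that $\sumof f$ and $\sumof g$ are nonnegative and nondecreasing.

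For \textbf{Approximation}: if $H$ is a $(1+\sparapx)$-sum approximation of $\rep f$, then for every $x$ we have $\sumof f(x)\le\sumof{\rep f}(x)\le\sumof{H}(x)\le(1+\sparapx)\sumof{\rep f}(x)\le(1+\sparapx)(1+\apx)\sumof f(x)$, which is exactly the claim. For \textbf{Summation} and \textbf{Shifting} I would first record the identities $\sumof{(f+g)}=\sumof f+\sumof g$ and $\sumof{\shift{f}{w}}(x)=\sumof f(x-w)$ (the latter read as $0$ when $x<w$), both immediate from the definition of $\sumof{(\cdot)}$ by splitting the summation range. Summation then follows since $\sumof{(\rep f+\rep g)}=\sumof{\rep f}+\sumof{\rep g}$ and the sum of two quantities, each within a factor $(1+\apx)$ of a target, is within $(1+\apx)$ of the sum of the targets; Shifting follows since $\sumof{\shift{\rep f}{w}}(x)=\sumof{\rep f}(x-w)$ lies within a factor $(1+\apx)$ of $\sumof f(x-w)=\sumof{\shift{f}{w}}(x)$.

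The \textbf{Convolution} bound is the crux. The starting point is the identity $\sumof{(f\ast g)}(w)=\sum_{x+y\le w}f(x)g(y)=\sum_{y}g(y)\,\sumof f(w-y)$ (and likewise with $f,g$ replaced by $\rep f,\rep g$), obtained by carrying out the inner sum over $x$. One cannot just bound $\sumof{\rep f}$ and $\sumof{\rep g}$ pointwise here, because in $\sum_y\rep g(y)\,\sumof{\rep f}(w-y)$ the factor $\rep g(y)$ is \emph{not} a cumulative sum and is not controlled on its own. The remedy is summation by parts in $y$: writing $\rep g(y)=\sumof{\rep g}(y)-\sumof{\rep g}(y-1)$ (with $\sumof{\rep g}(-1)=0$) and using that $y\mapsto\sumof{\rep f}(w-y)$ is nonincreasing, one rewrites $\sum_y\rep g(y)\,\sumof{\rep f}(w-y)$ as a \emph{nonnegative} linear combination of the values $\sumof{\rep g}(0),\dots,\sumof{\rep g}(w)$, and performing the identical rewriting with $g$ in place of $\rep g$ yields the same combination of $\sumof g(0),\dots,\sumof g(w)$. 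The per-coordinate inequalities $\sumof g(y)\le\sumof{\rep g}(y)\le(1+\apx)\sumof g(y)$ then give $\sum_y g(y)\,\sumof{\rep f}(w-y)\le\sum_y\rep g(y)\,\sumof{\rep f}(w-y)\le(1+\apx)\sum_y g(y)\,\sumof{\rep f}(w-y)$. Finally, since the remaining outer factor $g(y)$ is nonnegative, I may replace $\sumof{\rep f}(w-y)$ by $\sumof f(w-y)$ from below and by $(1+\apx)\sumof f(w-y)$ from above, turning the two outer bounds into $\sumof{(f\ast g)}(w)$ and $(1+\apx)\sumof{(f\ast g)}(w)$; the two factors compound to $(1+\apx)^2$, as claimed.

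I expect the convolution step to be the only genuine obstacle, and within it the delicate point is ordering the two ``rounding'' operations — first replacing the $\rep g$-cumulative sums by $g$-cumulative sums via the summation-by-parts expansion, then replacing $\sumof{\rep f}$ by $\sumof f$ against the nonnegative remaining factor — so that every inequality is applied against a manifestly nonnegative quantity and none of them flip. To keep the argument modular it is probably cleanest to isolate the summation-by-parts estimate as a sub-claim of the form ``if $\phi$ is nonnegative and nondecreasing then $\sum_y\rep g(y)\,\phi(w-y)$ lies within a factor $(1+\apx)$ of $\sum_y g(y)\,\phi(w-y)$ for every $w$'', and then invoke it with $\phi=\sumof{\rep f}$, after which only the pointwise comparison of $\sumof{\rep f}$ with $\sumof f$ remains.
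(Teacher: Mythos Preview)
Your argument is correct, including the convolution part. The approach for Approximation, Summation and Shifting is essentially identical to the paper's. For Convolution, however, the paper proceeds differently and, arguably, more simply than your summation-by-parts route. You expand $\sumof{(\rep{f}\ast\rep{g})}(w)=\sum_y\rep{g}(y)\,\sumof{\rep{f}}(w-y)$ and then invoke Abel summation to rewrite this as a nonnegative combination of the values $\sumof{\rep{g}}(y)$, so that the hypothesis on $\rep{g}$ becomes applicable. The paper sidesteps this entirely by swapping the order of summation: from $\sum_x\rep{f}(x)\,\sumof{\rep{g}}(w-x)$ it uses $\rep{f}(x)\ge 0$ to bound $\sumof{\rep{g}}$ pointwise, then unfolds back to the double sum $\sum_{x+y\le w}\rep{f}(x)g(y)$ and refolds it the other way as $\sum_y g(y)\,\sumof{\rep{f}}(w-y)$; now $g(y)\ge 0$ lets one bound $\sumof{\rep{f}}$ pointwise. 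So your remark that ``the factor $\rep{g}(y)$ is not controlled on its own'' is true but not an obstruction: one can simply Fubini over to the other variable so that the uncontrolled pointwise factor is always one known to be nonnegative, avoiding Abel summation altogether. Your route has the small conceptual advantage of packaging the argument as a clean sub-claim (``if $\phi$ is nonnegative nondecreasing then $\sum_y\rep{g}(y)\phi(w-y)$ lies within $(1+\apx)$ of $\sum_y g(y)\phi(w-y)$''), which could be reused; the paper's route is shorter and needs only nonnegativity of the pointwise values, not monotonicity of $\phi$.
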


\begin{proof}
	\ 
	\begin{description}
		
		\item[Approximation:] Let $F'$ be a $(1+\sparapx)$-approximation of $\rep{f}$. For every $x$,  $\sumof{f}(x) \leq \sumof{\rep{f}}(x) \leq (1+\apx)\sumof{f}(x)$ and $\sumof{\rep{f}}(x) \leq \sumof{F'}(x) \leq (1+\sparapx)\sumof{\rep{f}}(x)$. We therefore have that $\sumof{f}(x) \leq \sumof{F'}(x) \leq (1+\sparapx)(1+\apx)\sumof{f}(x)$.

		\item[Summation:] For every $x$ we have that $\sumof{f}(x) \leq \sumof{\rep{f}}(x) \leq (1+\apx)\sumof{f}(x)$ and $\sumof{g}(x) \leq \sumof{\rep{g}}(x) \leq (1+\apx)\sumof{g}(x)$, adding these two equations we get $\sumof{(f+g)}(x) \leq \sumof{(\rep{f}+\rep{g})}(x) \leq (1+\apx)\sumof{(f+g)}(x)$.

		\item[Shifting:] For $x < w$, $\shift{f}{w}(x) = 0 = \shift{\rep{f}}{w}(x)$. For $x \geq w$ let $y = x - w$. Since $y \geq 0$ we have that $\sumof{f}(y) \leq \sumof{\rep{f}}(y) \leq (1+\apx)\sumof{f}(y)$ and therefore $\sumof{\shift{f}{w}}(x) \leq \sumof{\shift{\rep{f}}{w}}(x) \leq (1+\apx)\sumof{\shift{f}{w}}(x)$.
		
		\item[Convolution:]
We first prove that $\sumof{(\rep{f} \ast \rep{g})}(w) \geq \sumof{(f \ast g)}(w)$:
\[\begin{aligned}[t]
	 \sumof{(\rep{f} \ast \rep{g})}(w)  &=
	 \sum\limits_{x+y \leq w} \rep{f}(x)\rep{g}(y) = 
	 \sum\limits_{x \leq w} \sum\limits_{y \leq w-x} \rep{f}(x)\rep{g}(y) = 
	 \sum\limits_{x \leq w} \sbrackets{\rep{f}(x) \sum\limits_{y \leq w-x} \rep{g}(y)} \\
	&= \sum\limits_{x \leq w} \rep{f}(x) \sumof{\rep{g}}(w-x) \geq 
	 \sum\limits_{x \leq w} \rep{f}(x) \sumof{g}(w-x) = 
	 \sum\limits_{x \leq w} \rep{f}(x) \sum\limits_{y \leq w-x} g(y) \\
	& =\sum\limits_{x+y \leq w} \rep{f}(x) g(y) = 
	 \sum\limits_{y \leq w} \sum\limits_{x \leq w-y} \rep{f}(x) g(y) = 
	 \sum\limits_{y \leq w} \sbrackets{g(y) \sum\limits_{x \leq w-y} \rep{f}(x)} \\
	&= \sum\limits_{y \leq w} g(y) \sumof{\rep{f}}(w-y) \geq 
	 \sum\limits_{y \leq w} g(y) \sumof{f}(w-y) = 
	 \sum\limits_{y \leq w} g(y) \sum\limits_{x \leq w-y} f(x) \\
	&= \sum\limits_{x+y \leq w} f(x)g(y) = 
	 \sumof{(f \ast g)}(w).
\end{aligned}\]
\noindent Next we prove that $\sumof{(\rep{f} \ast \rep{g})}(w) \leq (1 + \apx)^2 \sumof{(f \ast g)}(w)$:
\[\begin{aligned}[t]
	 \sumof{(\rep{f} \ast \rep{g})}(w) &= 
	 \sum\limits_{x+y \leq w} \rep{f}(x)\rep{g}(y) = 
	 \sum\limits_{x \leq w} \sum\limits_{y \leq w-x} \rep{f}(x)\rep{g}(y) = 
	 \sum\limits_{x \leq w} \sbrackets{ \rep{f}(x) \sum\limits_{y \leq w-x} \rep{g}(y) } \\
	&= \sum\limits_{x \leq w} \rep{f}(x) \sumof{\rep{g}}(w-x) \leq 
	 \sum\limits_{x \leq w} \rep{f}(x) (1+\apx) \sumof{g}(w-x) =\\ &= 
	 (1+\apx) \sum\limits_{x \leq w} \rep{f}(x) \sum\limits_{y \leq w-x} g(y) = 
	 (1+\apx) \sum\limits_{x+y \leq w} \rep{f}(x) g(y) \\
	&= (1+\apx) \sum\limits_{\leq w} \sum\limits_{x \leq w-y} \rep{f}(x) g(y) = 
	 (1+\apx) \sum\limits_{y \leq w} \sbrackets{g(y) \sum\limits_{x \leq w-y} \rep{f}(x)} \\
	&= (1+\apx) \sum\limits_{y \leq w} g(y) \sumof{\rep{f}}(w-y) \leq 
 (1+\apx) \sum\limits_{y \leq w} g(y) (1+\apx) \sumof{f}(w-y) = \\
 &=	 (1+\apx)^2 \sum\limits_{y \leq w} g(y) \sum\limits_{x \leq w-y} f(x)  
	 = (1+\apx)^2 \sum\limits_{x+y \leq w} f(x)g(y) \\
	 &= (1+\apx)^2 \sumof{(f \ast g)}(w).&\qedhere 
\end{aligned}\]
	\end{description} 
\end{proof}

\noindent We next describe the way that we represent functions.
\begin{definition}[a function representation]
	Given a function $f \domimg$, the \emph{representation of $f$} is defined to be a list of all the pairs $(x, f(x))$ where $f(x) > 0$. The list is kept sorted by the $x$ value.
	The \emph{size} of $f$ (denoted by $|f|$) is the number of pairs in the representation of $f$. To simplify our presentation, we allow the representation to include multiple pairs with the same value of $x$. This can be easily fixed with a single scan over the  representation.
\end{definition}

In the following paragraphs we show how to efficiently implement the following operations on functions: \emph{sparsification}, \emph{summation}, \emph{shifting}, \emph{convolution}, and \emph{query}.
The output of each operation is a sum approximation.

\paragraph{Sparsification.}
Sparsification is the operation of constructing a $(1+\sparapx)$-sum approximation of $f$ (see Definition~\ref{def:sumapx}). The input is a function $f \domimg$ and a sparsification parameter $\sparapx > 0$, the output is a function $\rep{f} \domimg $ that is a $(1+\sparapx)$-sum approximation of $f$. The goal is to construct a function $\rep{f}$ that has a compact representation (i.e. a small number of points with non-zero values). To this end, we partition the values of $\sumof{f}$ into segments with elements belonging to $[r_{i},r_{i+1})$ (see Figure~\ref{fig:apx1}), where: 
\begin{align*}
	r_0 &= 0, 			 \\
	r_{i+1} & = \max\{ r_i + 1, \floor{ (1 + \sparapx)r_i } \} .
\end{align*}

\begin{figure}[t]
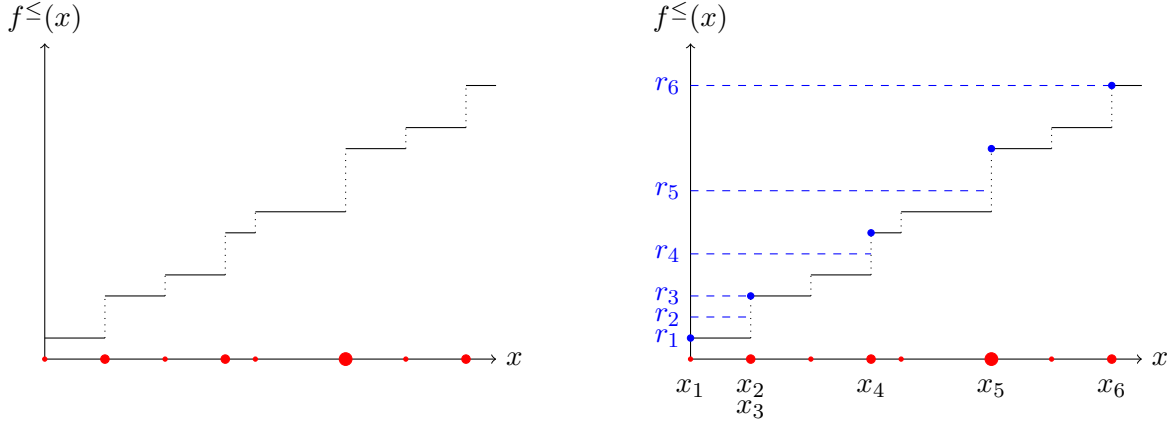

\centering
\begin{minipage}{.48\textwidth}
  \centering
  \begin{tikzpicture}[scale=0.4, x=1cm, y=0.7cm]
  	\input{graphics/apx1} 
  	  \end{tikzpicture}
\end{minipage}\hfill
\begin{minipage}{.48\textwidth}
  \centering
  \begin{tikzpicture}[scale=0.4, x=1cm, y=0.7cm]
  	\input{graphics/apx2}  
  \end{tikzpicture} 
\end{minipage}
\caption{On the left,  $\sumof{f}(x)$ compared to $f(x)$. The red point at position $x$ is wider as $f(x)$ is larger.  On the right, the blue points are the first entries that have value of at least $r_i$ \label{fig:apx1}}
\end{figure}

\noindent  We call $x_i = \min_{x} \cbrackets{ \sumof{f}(x) \geq r_i } $ the $i$-th {\em breakpoint}.
For any $x$, let $\successor(x)$ be the strict successor of $x$ among $\{x_i\}$, i.e. $\successor(x) = \min_{i} \cbrackets{ x_i > x }$.
We define the function $\tildes{f}$ (see Figure~\ref{fig:apx3}) as: $$\tildes{f}(x) = \sumof{f}(\successor(x)-1),$$
where $\tildes{f}(x) = \lim_{x\to\inf} \sumof{f}(x)$ if $\successor(x)=\infty$.

\begin{figure}[h!]
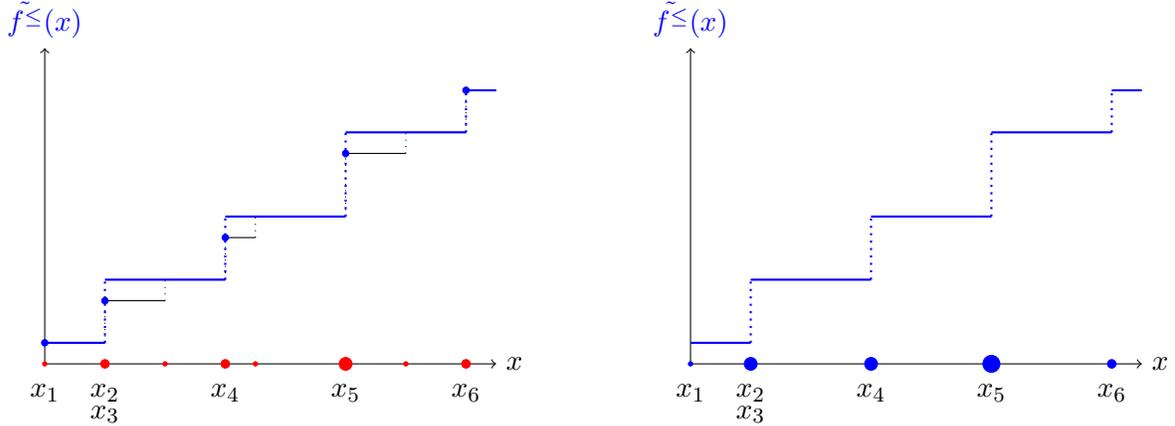

\centering
\begin{minipage}{.48\textwidth}
  \centering
  \begin{tikzpicture}[scale=0.4, x=1cm, y=0.7cm]
  	\input{graphics/apx3} 
  \end{tikzpicture}
\end{minipage}\hfill
\begin{minipage}{.48\textwidth}
  \centering
  \begin{tikzpicture}[scale=0.4, x=1cm, y=0.7cm]
  	\input{graphics/apx4} 
  \end{tikzpicture}
\end{minipage}
 \caption{On the left, $\tildes{f}$ (in blue) is defined from $\sumof{f}$ and has the same value in any segment $[x_i,x_{i+1})$. On the right, the construction of $\rep{f}(x)$. The blue points are only at positions $x_i$ and are wider as $\rep{f}(x_i)$ is larger.\label{fig:apx3}}
\end{figure}

\begin{lemma}\label{lemma:tildes:f}
	$\tildes{f}$ is a $(1 + \sparapx)$-approximation of $\sumof{f}$.
\end{lemma}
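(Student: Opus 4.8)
The plan is to verify the two‑sided estimate $\sumof{f}(x)\le\tildes{f}(x)\le(1+\sparapx)\sumof{f}(x)$ directly for every $x$, using only that $\sumof{f}$ is non‑decreasing and integer‑valued. The lower bound is immediate: since $\successor(x)>x$ and both are integers, $\successor(x)-1\ge x$, so $\tildes{f}(x)=\sumof{f}(\successor(x)-1)\ge\sumof{f}(x)$ by monotonicity; and in the degenerate case $\successor(x)=\infty$, $\tildes{f}(x)=\lim_{x\to\infty}\sumof{f}(x)=\sup_y\sumof{f}(y)$, which still dominates $\sumof{f}(x)$.

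For the upper bound, fix $x$ and let $i$ be the largest index with $x_i\le x$. This is well defined: $x_0$ is the minimum of the domain (since $r_0=0$), so the set of such indices is non‑empty, and it is finite because the breakpoints either diverge to infinity (when $\sumof{f}$ is unbounded) or there are only finitely many of them. By maximality of $i$, either $x_{i+1}$ is defined and $x_{i+1}>x$, whence $\successor(x)=x_{i+1}$ and $\tildes{f}(x)=\sumof{f}(x_{i+1}-1)$; or there is no breakpoint past $x_i$, whence $\successor(x)=\infty$ and $\tildes{f}(x)=\sup_y\sumof{f}(y)$. In both cases the definition of the breakpoints forces $\tildes{f}(x)<r_{i+1}$ --- in the first because $x_{i+1}$ is the \emph{first} point at which $\sumof{f}$ reaches $r_{i+1}$, in the second because $r_{i+1}$ exceeds every value of $\sumof{f}$ (otherwise $x_{i+1}$ would exist). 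At the same time $x\ge x_i$ gives $\sumof{f}(x)\ge\sumof{f}(x_i)\ge r_i$.

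It then remains to combine $\tildes{f}(x)<r_{i+1}$ with $\sumof{f}(x)\ge r_i$ through the recurrence $r_{i+1}=\max\{r_i+1,\floor{(1+\sparapx)r_i}\}$. Since $\tildes{f}(x)$ is a non‑negative integer strictly below $r_{i+1}$, we have $\tildes{f}(x)\le r_{i+1}-1$. If the maximum is attained by the term $r_i+1$, this bound equals $r_i\le(1+\sparapx)\sumof{f}(x)$; if it is attained by $\floor{(1+\sparapx)r_i}$, then $\tildes{f}(x)\le\floor{(1+\sparapx)r_i}-1<(1+\sparapx)r_i\le(1+\sparapx)\sumof{f}(x)$. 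Either way the claimed inequality follows.

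The computations are routine; the only points that need attention are the boundary behaviour --- checking that $i$ is always well defined and that $\successor(x)=\infty$ can only occur when $\sumof{f}$ is bounded, so that the limit defining $\tildes{f}$ there is finite --- and remembering that $\sumof{f}$, hence $\tildes{f}$, is integer‑valued, which is exactly what makes the $r_{i+1}=r_i+1$ branch (and in particular the trivial case $i=0$, where $\sumof{f}(x)=0$) go through.
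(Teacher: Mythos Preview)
Your proof is correct and follows essentially the same route as the paper's: both arguments pin down the index $i$ with $r_i\le\sumof{f}(x)<r_{i+1}$ (you phrase this via ``largest $i$ with $x_i\le x$'', which is equivalent since $x_i\le x\iff r_i\le\sumof{f}(x)$), deduce $\tildes{f}(x)<r_{i+1}$, and then split on which term of $\max\{r_i+1,\floor{(1+\sparapx)r_i}\}$ realises $r_{i+1}$, using integrality of $\tildes{f}$ in the ``$+1$'' branch. Your treatment of the boundary cases ($\successor(x)=\infty$, $i=0$) is a bit more explicit than the paper's, but the underlying idea is identical.
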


\begin{proof}
First observe that $\sumof{f}(x) \leq \tildes{f}(x)$ (since  $\successor(x) > x$ and $\sumof{f}$ is monotone). Consider any $x$ and let $i$ be the unique index such that $r_i \leq \sumof{f}(x) < r_{i+1}$. If $\successor(x)=\infty$ then $\tildes{f}(x) = \lim_{x\to\inf} \sumof{f}(x) < r_{i+1}$.
Otherwise, $x_{i+1}>x$ and $\tildes{f}(x) = \sumof{f}(x_{i+1}-1) < r_{i+1}$. We need to consider two cases:
	If $r_{i+1} \leq (1 + \sparapx)r_i$, then $\tildes{f}(x) < r_{i+1} \leq (1 + \sparapx)\sumof{f}(x)$.
If $r_{i+1} = r_i + 1$ and because the values of $\tildes{f}(x)$ are integer, $\tildes{f}(x) \leq r_{i+1}-1 = r_{i}$. So in both cases
$\tildes{f}(x) \leq  (1 + \sparapx)\sumof{f}(x)$.
\end{proof}

We can now define the function $\rep{f}$ (the $(1+\sparapx)$-sum approximation of $f$).
Observe that by Lemma~\ref{lemma:tildes:f}, every  $\rep{f}$ such that $\sumof{\rep{f}} = \tildes{f}$ is a  $(1+\sparapx)$-sum approximation.
We define $\rep{f}$ as the discrete derivative of $\tildes{f}$. That is,
\begin{equation*}
	\rep{f}(x) = \begin{cases}
	\tildes{f}(x) - \tildes{f}(x-1)     &x > 0, \\
	\tildes{f}(x)						&x = 0.
\end{cases}
\end{equation*}
It is easy to see that $\sumof{\rep{f}} = \tildes{f}$. It is also easy to construct the representation of $\rep{f}$ in linear time with a single scan over the representation of $f$ (see Algorithm~\ref{alg:sparsification}).
Let $M$ be the maximum value of $\sumof{f}$. 
Notice that $\tildes{f}$ can have at most $|\{x_i\}| = |\{r_i\}| = \log_{1+\sparapx} M$ different values. This means that 
 $|\rep{f}|$ is at most $\log_{1+\sparapx} M$.
The total running time is therefore  $O(|f| + \log_{1+\sparapx} M)$. 

\begin{algorithm}[h]
\renewcommand{\algorithmicrequire}{\textbf{Input:}}
\renewcommand{\algorithmicensure}{\textbf{Output:}}
\caption{\textsc{Sparsify}($f,\sparapx$)}
\label{alg:sparsification}
\begin{algorithmic}[1]
\REQUIRE a function $f$ represented by a sorted list of all pairs $(x,f(x))$ where $f(x) > 0$ and a sparsification parameter $\sparapx > 0$.
\ENSURE a function $\rep{f}$ that is a $(1+\sparapx)$-sum approximation of $f$ and is represented by a sorted list of at most $\log_{1+\sparapx} M$ pairs (where $M$ is the maximum value of $\sumof{f}$).
\STATE initialize $r = accum = prevaccum = prevx=0$
\FOR {every pair $(x,f(x))$ in sorted order} 
 \STATE $r \gets \max\{ r + 1, \floor{ (1 + \sparapx)r } \}$ 
 \WHILE{$accum < r$ }
 \STATE $accum \gets accum + f(x)$
 \STATE get the next pair $(x,f(x))$ in the list
 \ENDWHILE
 \STATE add the pair $(prevx,accum-f(x)-prevaccum)$ to $F$
  \STATE $prevaccum \gets accum-f(x)$
  \STATE $prevx \gets x$
\ENDFOR
\end{algorithmic}
\end{algorithm}

\paragraph{Summation.}
Given two $(1+\apx)$-sum approximations $\rep{f}$ and $\rep{g}$ of functions $f$ and $g$ respectively, we wish to construct the function $\rep{f} + \rep{g}$ (that is a $(1+\apx)$-sum approximation of $f+g$ by Lemma~\ref{lem:apxmath}).
We construct $\rep{f}+\rep{g}$ naively by setting $(\rep{f}+\rep{g})(x) = \rep{f}(x) + \rep{g}(x)$. The sorted list of $\rep{f} + \rep{g}$ can be obtained in linear time given two sorted lists of $\rep{f}$ and of $\rep{g}$.
The total space and time is therefore $O(|\rep{f}|+|\rep{g}|)$.

\paragraph{Shifting.}
Given a $(1+\apx)$-sum approximation $\rep{f}$ of $f$, the function shifted by $w$, $\shift{\rep{f}}{w}$ is a $(1+\apx)$-sum approximation of $\shift{f}{w}$ by Lemma~\ref{lem:apxmath}.
In order to create $\shift{\rep{f}}{w}$, we take every pair in the representation of $\rep{f}$, namely $(x,y = \rep{f}(x))$ and change it to $(x+w,y)$. $\shift{\rep{f}}{w}(x)$ will be the sum of all the pairs where the first coordinate is $x$.
The total space and time is  $O(|\rep{f}|)$.

\paragraph{Convolution.}
The convolution $\rep{f} \ast \rep{g}$ contains all the combinations of taking some $x$ value from $\rep{f}$ and some $y$ value from $\rep{g}$. For every pair $x,y$ such that $\rep{f}(x) \neq 0$ and $\rep{g}(x) \neq 0$ we add the value $\rep{f}(x) \rep{g}(y)$ to the value of $\rep{f} \ast \rep{g}$ at point $x+y$.

We sort these pairs in order to get the representation of $\rep{f} \ast \rep{g}$.
For a certain $y$ we have all the points $(x + y, \rep{f}(x))$ sorted already, those are $|\rep{g}|$ sorted sequences that we have to merge. 
The total space of the output $\rep{f} \ast \rep{g}$ is at most the number of such pairs $x,y$, that is $|\rep{f}| \cdot |\rep{g}|$.
But it is possible to obtain a stream of the sorted pairs with their value using less space, by using a heap to merge the lists. Assuming without loss of generality that $|\rep{g}| \leq |\rep{f}|$, each list is a value $y$ and a pointer to a point in $\rep{f}$, and the heap extracts the minimum value of the sum of $y$ and the value in the pointer. 
The total time to create $\rep{f} \ast \rep{g}$ is therefore $O(|\rep{f}| \cdot |\rep{g}| \cdot \log (\min\{ |\rep{f}|, |\rep{g}| \} ) )$ and the space $O(|\rep{f}| + |\rep{g}|)$.

\paragraph{Query.}
Given a $(1+\apx)$-sum approximation $\rep{f}$ of $f$ and a point $x$, we can query the value $\sumof{\rep{f}}(x)$ that satisfies $\sumof{f}(x) \leq \sumof{\rep{f}}(x) \leq (1 + \apx)\sumof{f}(x)$ in time  $O(|\rep{f}|)$.
This is because computing the function $\sumof{\rep{f}}(x) = \sum_{y \leq x} \rep{f}(y)$ takes  $O(|\rep{f}|)$ time by considering every $y$. Moreover, if we store the representation of $\rep{f}$ in a balanced binary search tree $T$ then a query can be done in $O(\log |\rep{f}|)$ time with a prefix sum query on $T$.

\section{The Algorithm of Halman~\cite{halman2016deterministic} (Simplified)}\label{sec:simple}

In this section we present a simplified version of the algorithm of Halman~\cite{halman2016deterministic} for \countingknapsack\ using sum approximations. 
The running time of this simple deterministic algorithm is \Simpletime and the space is \Simplespace.

For a set of weights $S$, let $\knap_{S}(x)$ denote the number of subsets of $S$ with total weight exactly $x$. The output of the algorithm is the function $\Knap_{\wset}$ that is a $(1+\eps)$-sum approximation of $\knap_{\wset}$.
The desired answer, $\sumof{\Knap}_{\wset}(\C)$, can then be easily  obtained using the query operation.

Recall that $\shift{\knap_S}{w}(x) = \knap_S(x-w)$ if  $x \geq w$ and 0 otherwise. The algorithm is based on the following observation:
\begin{lemma}
	Let $S$ be a set of integer weights and $w$ be an additional integer weight, then:
	\begin{equation*}\label{lem:simple}
		\knap_{S \cup \{w\}} = \knap_S + \shift{\knap_S}{w}
	\end{equation*}
\end{lemma}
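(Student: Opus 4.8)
The plan is to prove the identity $\knap_{S \cup \{w\}} = \knap_S + \shift{\knap_S}{w}$ by a direct combinatorial argument: partition the subsets of $S \cup \{w\}$ according to whether or not they contain the new element $w$, and count each class separately.

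First I would fix an arbitrary target weight $x$ and evaluate both sides at $x$. By definition, $\knap_{S \cup \{w\}}(x)$ is the number of subsets $T \subseteq S \cup \{w\}$ with $\sum_{v \in T} v = x$. Every such $T$ falls into exactly one of two cases. In the first case, $w \notin T$, so $T \subseteq S$ and $T$ has total weight $x$; the number of such subsets is precisely $\knap_S(x)$. In the second case, $w \in T$, so $T = T' \cup \{w\}$ with $T' \subseteq S$ and $\sum_{v \in T'} v = x - w$; such a $T'$ exists only when $x \geq w$ (since weights are non-negative integers), and in that case the number of choices for $T'$ is $\knap_S(x-w)$, while for $x < w$ there are none. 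Thus the second class contributes $\knap_S(x-w)$ when $x \geq w$ and $0$ otherwise, which is exactly $\shift{\knap_S}{w}(x)$ by the definition of the shift operator.

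Since these two cases are mutually exclusive and exhaustive, summing the two counts gives $\knap_{S \cup \{w\}}(x) = \knap_S(x) + \shift{\knap_S}{w}(x) = (\knap_S + \shift{\knap_S}{w})(x)$. As $x$ was arbitrary, the two functions agree everywhere, which is the claimed identity.

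I do not anticipate a genuine obstacle here; the only point that warrants a word of care is the boundary behavior at $x < w$, where the $w$-containing class is empty and the shift operator has been defined to return $0$, so the two sides still match. One should also note implicitly that the map $T' \mapsto T' \cup \{w\}$ is a bijection between weight-$(x-w)$ subsets of $S$ and weight-$x$ subsets of $S \cup \{w\}$ that contain $w$, which is immediate since $w \notin S$ (the element is "additional"), so no double-counting occurs.
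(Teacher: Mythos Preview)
Your proposal is correct and follows essentially the same approach as the paper: partition the subsets of $S \cup \{w\}$ by whether or not they contain $w$, observe that the two classes contribute $\knap_S(x)$ and $\shift{\knap_S}{w}(x)$ respectively, and add. Your version is simply more detailed in treating the boundary case $x < w$ and in making the bijection $T' \mapsto T' \cup \{w\}$ explicit.
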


\begin{proof}
Any subset of $S \cup \{w\}$ with weight $x$ either includes $w$ (the number of such solutions is $\knap_S(x-w)$) or does not include $w$ (the number of such solutions is $\knap_S(x)$).
Since these options are disjoint, we have that $\knap_{S \cup \{w\}}(x) = \knap_S(x) + \shift{\knap_S}{w}(x)$.
\end{proof}

\paragraph{The algorithm.}
The algorithm uses the above lemma to construct the set $S$ by inserting one element at a time (until $S=W$), keeping $\Knap_S$ updated. The algorithm starts by setting $\Knap_{\emptyset}(0) = 1$ and $\Knap_{\emptyset}(x) = 0$ for any $x\neq 0$.  
In the $i$-th step,
we compute the function $\Knap_{\{w_1, \ldots, w_{i}\}}$ from $\Knap_{\{w_1, \ldots, w_{i-1}\}}$.
Computing $\Knap_{\{w_1, \ldots, w_{i}\}}$ can be done with one shifting and one summation operation: 
$\Knap_{\{w_1, \ldots, w_{i}\}} = \Knap_{\{w_1, \ldots, w_{i-1}\}} + \shift{\Knap_{\{w_1, \ldots, w_{i-1}\}}}{w_i}$.
Notice that the size $|\Knap_{\{w_1, \ldots, w_{i}\}}| = 2|\Knap_{\{w_1, \ldots, w_{i-1}\}}|$ doubles from the summation operation. To overcome this blowup, at the end of each step of the algorithm, we sparsify with  parameter $\sparapx = (1 + \eps)^{\frac{1}{n}} - 1$.

\paragraph{Correctness.}
From Lemmas~\ref{lem:apxmath} and~\ref{lem:simple}, it follows that if $\Knap_S$ is a $(1+\alpha)$-sum approximation of $\knap_S$, then $\Knap_S + \shift{\Knap_S}{w}$ is a $(1+\alpha)$-sum approximation of $\knap_{S \cup \{w\}}$. 
Furthermore, the approximation factor of $ \Knap_{\{w_1, \ldots, w_i\}} $ after the sparsification is the approximation factor of $\Knap_{\{w_1, \ldots, w_{i-1}\}}$ multiplied by $(1 + \sparapx)$. We get that $\Knap_{\wset}$ is a $((1 + \eps)^{\frac{1}{n}})^n$-sum approximation of $\knap_{\wset}$, as required.

\paragraph{Time complexity.}
The size of $\Knap_{\{w_1, \ldots, w_i\}}$ after the sparsification is bounded by $ \log_{1+\sparapx} 2^i$.
The time complexity is therefore:
\begin{equation*}
	\sum\limits_{i = 1}^{n} \Oh{ \log_{1+\sparapx} 2^i} =
	\Oh{ \sum\limits_{i=1}^{n} \frac{i}{\log (1 + \sparapx) } } =
	\Oh{\frac{1}{\frac{1}{n}\log(1 + \eps)} \sum\limits_{i=1}^{n} i } =
	\Oh{n^3 \eps^{-1}}.
\end{equation*}
\paragraph{Space complexity.}
The space is  $O(|\Knap_{\wset}|) = O(\log_{1+\sparapx} 2^n) = 
\Oh{\frac{n}{\log (1+\sparapx)}} = \Oh{\frac{n^2}{\log (1+\eps)}} = O(n^2 \eps^{-1})$,
where we have used that $\ln(1+\eps)\geq \eps/2$ for $\eps\in (0,1)$.

\section{The Algorithm for Counting Knapsack Solutions}\label{sec:fptas}
In this section we present a deterministic \FPTAStime time \FPTASspace space
algorithm for counting knapsack solutions. 
The algorithm is based upon a similar observation to the one in Lemma~\ref{lem:simple}:
\begin{lemma}\label{lem:fptas}
	Let $S$ and $T$ be two sets of integer weights, then:
	\begin{equation*}
		\knap_{S \cup T} = \knap_S \ast \knap_T
	\end{equation*}
\end{lemma}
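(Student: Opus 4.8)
The plan is to prove the identity pointwise, by exhibiting a weight‑preserving bijection underlying it. Fix an arbitrary target weight $w$. Since (as in the divide‑and‑conquer that invokes this lemma) $S$ and $T$ are disjoint, every subset $U\subseteq S\cup T$ decomposes uniquely as $U = U_S\cup U_T$ with $U_S = U\cap S\subseteq S$ and $U_T = U\cap T\subseteq T$, and the weights add: $\weight(U)=\weight(U_S)+\weight(U_T)$. Conversely, for any $U_S\subseteq S$ and $U_T\subseteq T$ the union $U_S\cup U_T$ is a subset of $S\cup T$, and $(U_S,U_T)\mapsto U_S\cup U_T$ is the inverse map. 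So $U\mapsto(U\cap S,\,U\cap T)$ is a bijection from the subsets of $S\cup T$ to the pairs in $\{U_S\subseteq S\}\times\{U_T\subseteq T\}$, and it carries the total weight $w$ exactly to the pairs whose two weights sum to $w$.

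Next I would simply count. By the bijection above, the subsets of $S\cup T$ of weight exactly $w$ are in bijection with the pairs $(U_S,U_T)$ with $\weight(U_S)+\weight(U_T)=w$. Grouping such pairs by the value $x=\weight(U_S)$ (so that $\weight(U_T)=w-x$) and using that the choices of $U_S$ and $U_T$ are independent, the number of pairs with a fixed $x$ is $\knap_S(x)\cdot\knap_T(w-x)$. Summing over $x$ gives
\[
\knap_{S\cup T}(w)=\sum_{x}\knap_S(x)\,\knap_T(w-x)=\sum_{x+y=w}\knap_S(x)\,\knap_T(y)=(\knap_S\ast\knap_T)(w),
\]
which is precisely the definition of the convolution. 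Since $w$ was arbitrary, $\knap_{S\cup T}=\knap_S\ast\knap_T$. (Equivalently, in generating‑function language, $\sum_x \knap_S(x)z^x=\prod_{s\in S}(1+z^{\weight(s)})$, and the identity is just the factorization of $\prod_{s\in S\cup T}(1+z^{\weight(s)})$ over the disjoint union $S\cup T$, with polynomial multiplication being convolution of coefficients.)

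There is no substantive obstacle here; the only point worth stating explicitly is the disjointness of $S$ and $T$, which is what makes $U=(U\cap S)\cup(U\cap T)$ a genuine partition and hence the map a bijection — this is exactly why the lemma is used only on the two halves of a partition of $\wset$. If one prefers to view $S$ and $T$ as indexed (multi)sets of items, the same argument applies verbatim with "subset" read as "selection of a subset of the indices", so the disjointness is automatic from the indices being distinct.
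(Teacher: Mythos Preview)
Your proof is correct and follows essentially the same approach as the paper: decompose each subset of $S\cup T$ into its $S$-part and $T$-part, use independence to multiply the counts, and sum over all weight splits to obtain the convolution. Your version is more explicit about the bijection and the needed disjointness of $S$ and $T$ (which the paper leaves implicit), and adds the generating-function aside, but the underlying argument is the same.
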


\begin{proof}
A subset of $S \cup T$ of weight $\weight$ must be obtained by taking a subset of weight $\weight_S$ from $S$ and a subset of weight $\weight_T$ from $T$ where $\weight_S + \weight_T = \weight$.
Thus, $\knap_S(\weight_S)$ subsets of $S$ of weight $\weight_S$ and $\knap_T(\weight_T)$ subsets of $T$ weight $\weight_T$
generate $\knap_S(\weight_S)\knap_T(\weight_T)$ subsets of $S \cup T$ of weight $\weight_S + \weight_T$.
Overall, we get that $\knap_{S \cup T}(\weight) = \sum\limits_{\weight_S + \weight_T = \weight} \knap_S(\weight_S)\knap_T(\weight_T)=(\knap_S \ast \knap_T)(\weight).$
\end{proof}

\paragraph{The algorithm.}
As in Section~\ref{sec:simple}, our algorithm computes a $(1 + \varepsilon)$-sum approximation $\Knap_W$ of $\knap_W$ recursively. This time however, we do two things differently: (1) The value of the approximation factor is different for each recursion depth. In a recursive call of depth $i$, we are given a set $S$ and we compute a $(1 + \eps_i)$-sum approximation $\Knap_S$ of $\knap_S$ for some $\eps_i$ to be chosen later. (2) Given a set $S$ we recurse differently depending on the size of $S$:

\begin{enumerate}
	\item If $|S| > \hplus$, then we partition the set $S$ into two sets $A$ and $B$ each of size $|S|/2 = n/2^i$ and make two recursive calls: One computes a $(1 + \eps_{i+1})$-sum approximation $\Knap_A$ of $\knap_A$ and the other computes a $(1 + \eps_{i+1})$-sum approximation $\Knap_B$ of $\knap_B$.  We then find $\Knap_S$ by computing the convolution $\Knap_A \ast \Knap_B$ and sparsifying with parameter 
\begin{equation*}
	\di = \divalue,
\end{equation*}
where $\varepsilon$ is the original approximation parameter and  $c = \frac{\sqrt 2}{\sqrt 2 - 1}$.

\item 
If $|S| \leq \hplus$, then we apply the algorithm from Section~\ref{sec:simple} on the set $S$ with parameter $\delta_{\log(\hminus)} = \Omegah{\sqrt{\eps / n}}$. Observe that in such a case the recursion depth is at least $\log (\hminus)$.
\end{enumerate}

\paragraph{Correctness.}
From Lemmas~\ref{lem:apxmath} and~\ref{lem:fptas}, it follows that if $\Knap_A$ is a $(1 + \eps_{i+1})$-sum approximation of $\knap_A$ and $\Knap_B$ is a $(1 + \eps_{i+1})$-sum approximation of $\knap_B$, then $\Knap_S$ is a $(1 + \eps_{i+1})^2 (1+ \di)$-sum approximation of $\knap_S$.
This means that 
$\eps_i$ satisfies the following relation:
\begin{equation*}
	(1 + \eps_i) = (1 + \eps_{i+1})^2 (1 + \di)
\end{equation*}
From the above equation and since on the bottom of the recursion with $\delta_i =  \delta_{\log (\hminus)}$, the final approximation factor of $\Knap_{\wset}$ is:
\begin{equation*}
	(1 + \eps_0) = (1 + \delta_{\log (\hminus)})^{\hminus} \prod\limits_{i = 0}^{\log(\hminus) - 1} (1 + \di)^{2^i} = \prod\limits_{i = 0}^{\log (\hminus)} (1 + \di)^{2^i}
\end{equation*}
We need to prove that the above product is not larger than $(1 + \eps)$. 
Since $	\sum\limits_{i = 0}^{x} 2^{i/2} < \frac{\sqrt 2}{\sqrt 2 - 1} \cdot 2^{x/2} = c \cdot 2^{x/2}
$ and $
	(1 + \di)^{2^i} = (1+\di)^{1/\di \cdot 2^{i}\di} \leq e^{2^{i}\di}= e^{\frac{\eps^{3/4}}{2c} n^{-1/4} 2^{i/2}}$ we obtain:
\begin{equation*}
	\prod\limits_{i = 0}^{\log (\hminus)} (1 + \di)^{2^i} \le e^{\frac{\eps^{3/4}}{2c} n^{-1/2} \sum_{i = 0}^{\log (\hminus)} 2^{i/2}} < e^{\frac{\eps^{3/4}}{2} n^{-1/4} 2^{\log (\hminus)/2}} = e^{\frac{\eps}{2}} \leq (1 + \eps),
\end{equation*}
where the last inequality follows from $\ln(1+\eps)\geq \eps/2$ for $\eps\in (0,1)$. Moreover, since the recursion changes at depth $\log (\hminus)$ we further need to assume that $\varepsilon \ge 1/n$. These assumptions are without loss of generality since for $\varepsilon > 1 $ we could simply use $\varepsilon = 1$ and for $\varepsilon < 1/n$ the previous algorithms are faster.

\paragraph{Time complexity.}
We analyze the time complexity of every recursion depth $i$.
For depth $i = \log (\hminus)$, we apply the simple algorithm from Section~\ref{sec:simple} $\floor{\hminus}$ times on sets of size $\Thetah{\hplus}$ with $ \delta_{\log (\hminus)} = \Omegah{\sqrt{\eps / n}}$, the running time is therefore $O({n^{2.5}}{\eps^{-1.5}})$. 

For depth $i < \log(\hminus)$, we apply $2^i$ convolutions and sparsifications. The time of the convolutions is dominant.
The total running time is therefore:
\begin{align*}
\sum\limits_{i = 0}^{(\log \hminus) - 1} 2^i \rbrackets{\frac{n}{2^{i+1} \cdot \delta_{i+1}}}^2 \log \rbrackets{\frac{n}{2^{i+1} \cdot \delta_{i+1}}} &\leq 
	\frac{1}{2} \sum\limits_{i = 1}^{\log \hminus} 2^i \rbrackets{\frac{n}{2^i \cdot \di}}^2 \log (n\eps^{-1}) \\ 
	&= \Oh{\sum\limits_{i = 1}^{\log \hminus} \frac{n^{2.5}}{\eps^{1.5}} \log (n\eps^{-1}) } \\
	&= O(n^{2.5}\varepsilon^{-1.5}\log(n\eps^{-1})\log (n \eps)).
\end{align*}

\paragraph{Space complexity.}
Since each recursive call makes at most two recursive calls, we do not need to keep more than two representation of sum approximations
on every level of the recursion.
We have seen in Section~\ref{sec:apx} that it is possible to construct sparsification of convolution in linear space.
The space complexity of all recursive calls of depth $i < \log(\hminus)$ is therefore:
\begin{equation*}
	\sum\limits_{i = 0}^{\log (\hminus)} \rbrackets{ 2 \cdot \frac{n}{2^i \cdot \di} } = \Oh{\sum\limits_{i = 0}^{\log (\hminus)} \frac{n^{1.25}}{2^{i/2} \cdot \eps^{0.75}} } = \Oh{{n^{1.25}}{\eps^{-0.75}}}
\end{equation*}
One call to the algorithm from Section~\ref{sec:simple} uses $
\Oh{\frac{\rbrackets{\hplus}^2}{\delta_{\log (\hminus)}}} =
\Oh{\frac{\rbrackets{\hplus}^2}{\sqrt{\eps / n}}} = O({n^{1.5}}{\eps^{-1.5}})$ space, and therefore the total space complexity is \FPTASspace.

\section{The Algorithm for Counting Integer Knapsack Solutions}\label{sec:integer}
In this section we show how to generalize the  algorithms of Sections~\ref{sec:simple} and~\ref{sec:fptas} to the integer version of counting knapsack solutions.

\subsection{Generalizing the algorithm of Section~\ref{sec:simple}}
In Section~\ref{sec:simple} we showed how to insert into a set $S$ a single item with weight $w$. 
We now need to show how to insert a single item with weight $w$ and multiplicity $u$.
The proof of the following lemma is similar to that of Lemma~\ref{lem:simple}.
\begin{lemma}
	Let $S$ be a set of integer pairs representing weights and multiplicity of items, and let $(w,u)$ be the weight and multiplicity of an additional item, then:
	\begin{equation*}
		 \knap_{S \cup \cbrackets{(w,u)}} = \knap_S + \shift{\knap_S}{w} + \shift{\knap_S}{2w} + \ldots + \shift{\knap_S}{u \cdot w}
	\end{equation*}
\end{lemma}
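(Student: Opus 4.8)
The plan is to mimic the proof of Lemma~\ref{lem:simple}, but now accounting for how many copies of the item of weight $w$ we place in a subset. Fix a target weight $x$ and consider any multiset-subset of $S \cup \cbrackets{(w,u)}$ whose total weight is exactly $x$. Such a multiset uses exactly $j$ copies of the new item for some $j \in \cbrackets{0, 1, \ldots, u}$, and these $j$ copies contribute weight $j \cdot w$; the remaining part is an arbitrary multiset-subset of $S$ of weight $x - j w$. Hence the number of multisets using exactly $j$ copies is $\knap_S(x - j w) = \shift{\knap_S}{j w}(x)$ (which is $0$ when $x < j w$).

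Next I would observe that the events ``uses exactly $j$ copies of the new item'' for distinct values of $j$ are pairwise disjoint, and together they exhaust all multiset-subsets of $S \cup \cbrackets{(w,u)}$ of weight $x$. Summing over $j$ from $0$ to $u$ therefore gives
\begin{equation*}
	\knap_{S \cup \cbrackets{(w,u)}}(x) = \sum_{j=0}^{u} \shift{\knap_S}{j w}(x),
\end{equation*}
which is exactly the claimed identity evaluated pointwise at $x$. Since this holds for every $x$, the functional identity follows.

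The only subtle point — and the step I would be most careful about — is making sure the case analysis on $j$ really does partition the solution set without over- or under-counting: each multiset-subset determines its number of copies of the new item uniquely, so the decomposition is a genuine partition, and one must also check the boundary cases $x < j w$, where the corresponding shifted function is identically zero and contributes nothing, consistent with there being no such multiset. Everything else is a routine rewriting in terms of the shift operator defined in Section~\ref{sec:apx}. I do not expect any real obstacle here; the lemma is a direct generalization of Lemma~\ref{lem:simple} from a $0/1$ choice (include $w$ or not) to a choice among $u+1$ possibilities (include $0, 1, \ldots,$ or $u$ copies of $w$).
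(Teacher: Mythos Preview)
Your proposal is correct and is exactly the argument the paper has in mind: the paper does not spell out a proof but simply states that it ``is similar to that of Lemma~\ref{lem:simple}'', i.e., partition solutions by how many copies $j\in\{0,\ldots,u\}$ of the new item are used and observe that each block contributes $\shift{\knap_S}{jw}(x)$. Your handling of the boundary case $x<jw$ via the definition of the shift operator is fine and matches the paper's conventions.
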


We will describe a new operation on sum approximations that creates the sparsification of $\rep{g} =(\Knap_S + \shift{\Knap_S}{w} + \shift{\Knap_S}{2w} + \ldots + \shift{\Knap_S}{u \cdot w})$ without actually computing $\rep{g}$, i.e. without actually computing all the points with non-zero value.

\paragraph{Events.}

Observe that a point $(x,y)$ with non-zero value $y = \Knap_S(x)$ implies $u+1$ points in the above sum: $(x, y), (x + w, y), (x + 2w, y), \ldots, (x + u \cdot w, y)$.
We call the first point $(x,y)$ a {\em start event} (with  position $x$ and value $y$) and the last point $(x + u \cdot w,y)$  an {\em end event} (with position $x + u \cdot w$ and value $y$).
Overall, for every $x$ with non-zero value $y = \Knap_S(x)$ there are two events, a total of $2|\Knap_S|$ events. It is possible to sort in linear time the sequence of events by their positions $\{x\}_i \cup \{x + u \cdot w\}_i$  because $\Knap_S$ is given sorted. We call the sorted list of events the {\em event list}. 

Similarly to Algorithm~\ref{alg:sparsification}, we could construct the sparsification of $\rep{g} = (\Knap_S + \shift{\Knap_S}{w} + \shift{\Knap_S}{2w} + \ldots + \shift{\Knap_S}{u \cdot w})$ by scanning all points in $\rep{g}$. This however would be too costly. Instead, we next present a new operation that constructs the sparsification of $\rep{g}$ while only scanning the $O(|\Knap_S|)$ events in the event list. 

\paragraph{\textsc{InsertAndSparsify}.}
 While scanning the event list, when we see a start event $(x, y)$ then we say that this event is an {\em active event} and that all the points $(x, y), (x + w, y), (x + 2w, y), \ldots, (x + u \cdot w, y)$ are {\em active points}. These points will become inactive when we will see the end event $(x + u \cdot w, y)$. As in Algorithm~\ref{alg:sparsification}, we would like to accumulate the values of all points seen so far. When the accumulator is larger than $r$, we introduce a new breakpoint (i.e., output a new point to the sparsification of $\rep{g}$) and set the new $r$ to be $\max\{ r + 1, \floor{ (1 + \sparapx)r } \}$. During the scan, apart from the accumulator, we also maintain a value $Y=$ the sum of values of all currently active events.
 
 When we scan a start event $(x,y)$, we add the value $y$ to both the accumulator and to $Y$.  This is not enough. We also need to add to the accumulator the values of all active points whose position is $x$. These are precisely the points whose start events were at positions $x-w, x-2w,\ldots, x-u\cdot w$. Notice that all these points have the same start position modulo $w$. For this reason, we maintain a balanced binary search tree $T$ that stores all active events keyed by their start position modulo $w$. Each node $v$ in $T$ with key $r\in \{0,\ldots,w-1\}$ stores a field $Y_v=$ the sum of values of all active points whose start position is $x$ such that $x=r \pmod w$. When we see a start event $(x,y)$, we search in $T$ for the node $v$ with key $x \bmod w$ (or create one if no such node exists), we increase the node's $Y_v$ field by $y$, and increase the accumulator by $y$. If $(x,y)$ is an end event we subtract $y$ from $Y_v$ and from $Y$.
 
After processing event $(x,y)$ as explained above, we want to process the next event $(x',y')$ in the event list. Before doing so, we need to: (1) increase the accumulator by the total value of all points in the segment $[x, x')$, and (2) if the updated accumulator is larger than $r$, find and output the (possibly many) new breakpoints whose positions are between $x$ and $x'$.\\\\ 
(1) We partition the segment $[x,x')$ into three segments: $[x, k_1 w), [k_1 w, k_2 w), [k_2 w, x')$ such that the lengths of the first and last segments are smaller than $w$, and the length of the middle segment is a multiple of $w$. Notice that every segment of length $w$ contains an active point from every active event exactly once. Therefore, to obtain the total value of points in the segment $[x, x')$ we query $T$ for the total value in segment $[x, k_1 w)$ (with a {\em suffix sum} query on the $Y_v$ values) and in segment $[k_2 w, x')$ (with a {\em prefix sum} query on the $Y_v$ values), and add to it $Y \cdot (k_2 - k_1)$ (the total value in segment $[k_1 w, k_2 w)$).\\\\
(2)  After increasing the accumulator by the above total value, if the accumulator becomes larger than $r$, then we will find all the new breakpoints in $[x, x')$ in $O(\log |\Knap_S|)$ time per breakpoint. Suppose the accumulated value at $x$ was $prevaccum$. To find the first breakpoint in the segment $[x, k_1 w)$ we query $T$ for the the first node after $x \bmod w$ such that the sum of $Y_v$ values between $x \bmod w$ and that node is at least $r - prevaccum$ (we call this a {\em succeeding sum} query on $T$,  and we symmetrically define a {\em preceding sum} query in which we seek the first node before $x \bmod w$ rather than after). We then output this breakpoint, set $prevaccum$ to be the accumulated value in this breakpoint, set $r = \max\{ r + 1, \floor{ (1 + \sparapx)r } \}$, and continue in the same way to find the next breakpoint in the segment $[x, k_1 w)$. Next, we find the breakpoints in the segment $[k_1 w, k_2 w)$. Since this segment is composed of $(k_2 - k_1)$ subsegments of length $w$, and since each of these subsegments contributes exactly $Y$ to the accumulator, it is easy (in $O(1)$ time) to find which subsegment contains the next breakpoint. On this subsegment we proceed similarly as on $[x, k_1 w)$. Finally, we need to find the breakpoints in segment $[k_2 w, x')$, again similarly as in $[x, k_1 w)$.
See Algorithm~\ref{alg:insertandsparsify}.

\paragraph{Time and space complexity.}
Each operation on the binary search tree $T$ takes $O(\log |\Knap_S|)$ time thus the total time for \textsc{InsertAndSparsify} is $\Oh{(|\Knap_{S \cup \{(w, u)\}}| + |\Knap_S|) \cdot \log |\Knap_S|}$. If $S=\{w_1, \ldots, w_i\}$ then $|\Knap_S| = \log_{1 + \sparapx} U^{i}$ and $|\Knap_{S \cup \{(w, u)\}}| = \Oh{\log_{1 + \sparapx} U^{i}}$.  The total time complexity of the algorithm is therefore:

\begin{align*}
\sum\limits_{i = 1}^{n} \Oh{ \log_{1+\sparapx} U^i \cdot \log (\log_{1+\sparapx} U^i}) &=
	\Oh{ \sum\limits_{i=1}^{n} \frac{i\log U}{\log (1 + \sparapx)} \cdot \log\left(\frac{i\log U}{\log (1 + \sparapx)}\right) } \\
		&= \Oh{\frac{\log U}{\frac{1}{n}\log(1 + \eps)} \cdot \log\left(\frac{n\log U}{\frac{1}{n}\log(1 + \eps)}\right)  \sum\limits_{i=1}^{n} i } \\
		& =\Oh{n^3 \eps^{-1} \log U \log \rbrackets{n\eps^{-1} \log U}}.
\end{align*}

\noindent The space complexity is  $\Oh{ |\Knap_{\wset}| } = \Oh{ \log_{1+\sparapx} U^n} = 
\Oh{\frac{n\log U}{\log (1+\sparapx)}}  = O(n^2 \eps^{-1} \log U)$.

\begin{algorithm}[h!]
\renewcommand{\algorithmicrequire}{\textbf{Input:}}
\renewcommand{\algorithmicensure}{\textbf{Output:}}
\caption{\textsc{InsertAndSparsify}($\Knap_S, w, u, \sparapx$)}
\label{alg:insertandsparsify}
\begin{algorithmic}[1]
\REQUIRE a sum approximation $\Knap_S$ of $\knap_S$, a new item with weight of $w$ and multiplicity $u$, and a sparsification parameter $\sparapx$.
\ENSURE a function $\rep{g}$ that is a $(1+\sparapx)$-sum approximation of $\knap_{S \cup (w, u)}$.
\STATE initialize $T$ as an empty binary search tree of pairs $(x, y)$ indexed by $(x \bmod w)$
\STATE initialize $Y = x = accum = prevaccum = 0$, and $r = 1$
\FOR {every event $(x', y')$ in sorted order} 
	\STATE $k_1 \gets \ceil{\frac{x}{w}}$, $k_2 \gets \floor{\frac{x'}{w}}$
	\WHILE{$accum + T.\textsf{suffixSum}(x) + Y \cdot (k_2 - k_1) + T.\textsf{prefixSum}(x' - 1) \geq r$ }
		\IF{$accum + T.\textsf{suffixSum}(x) \geq r$}
			\STATE $bp \gets T.\textsf{succeedingSum}(x, r - accum)$
			\STATE $accum \gets accum + T.\textsf{suffixSum}(x) - T.\textsf{suffixSum}(bp)$
		\ELSE
			\STATE $k_3 \gets k_1 + \floor{(r - accum - T.\textsf{suffixSum}(x))/Y}$
			\STATE $bp \gets T.\textsf{precedingSum}(x', r - (accum + T.\textsf{suffixSum}(x) + Y \cdot (k_3 - k_1)))$ 
			\STATE $accum \gets accum + T.\textsf{suffixSum}(x) + Y \cdot (k_3 - k_1) + T.\textsf{prefixSum}(bp - 1)$ 
		\ENDIF
		\STATE add the pair $(x', accum - prevaccum-T[bp\bmod w])$ to $\rep{g}$
		\STATE $prevaccum \gets accum$, $x \gets bp$, $k_1 \gets \ceil{\frac{x}{w}}$, $r \gets \max\{ r + 1, \floor{ (1 + \sparapx)r } \}$ 
	\ENDWHILE
	\IF{$(x',y')$ is a start event}
		\STATE $Y \gets Y + y'$
		\STATE $T[x'\bmod w] \gets T[x'\bmod w]+y'$
	\ELSE
		\STATE $Y \gets Y - y'$
		\STATE $T[x'\bmod w] \gets T[x'\bmod w]-y'$
	\ENDIF
	\STATE $accum \gets accum + T.\textsf{suffixSum}(x) + Y \cdot (k_2 - k_1) + T.\textsf{prefixSum}(x' - 1)$
	\STATE $x \gets x'$
\ENDFOR
\end{algorithmic}
\end{algorithm}

\subsection{Generalizing the algorithm of Section~\ref{sec:fptas}}
The only change to the algorithm of Section~\ref{sec:fptas} is that when the set size is small (i.e. when we call the algorithm of Section~\ref{sec:simple}) we use \textsc{InsertAndSparsify} as in the previous subsection.

\paragraph{Time and space complexity.}
We observe that the size of the representation of a $(1 + \apx)$-sum approximation has been changed to $|\Knap_S| = \log_{1 + \sparapx} U^{|S|} = \frac{|S|\log U}{\sparapx}$.
As in Section~\ref{sec:fptas}, we calculate the total time complexity for depth $\log (\sqrt{n\eps})$:
\[\Oh{\sqrt{n\eps} \cdot \rbrackets{\rbrackets{\sqrt{n/\eps}}^3 \rbrackets{\sqrt{\eps/n}}^{-1} \log U \log \rbrackets{\sqrt{n\eps} \rbrackets{\sqrt{\eps/n}}^{-1} \log U}}}\]
which is $\Oh{n^{2.5}\eps^{-1.5}\log U \log (n\eps^{-1}\log U)}$. The total time for depths $i < \log (\sqrt{n\eps})$ is:
\begin{align*} 
	\sum\limits_{i = 0}^{(\log \hminus) - 1} 2^i \rbrackets{\frac{n \log U}{2^{i+1} \cdot \delta_{i+1}}}^2 \log \rbrackets{\frac{n \log U}{2^{i+1} \cdot \delta_{i+1}}}  &\leq \frac{1}{2} \sum\limits_{i = 1}^{\log \hminus} 2^i \rbrackets{\frac{n \log U}{2^i \cdot \di}}^2 \log (n\eps^{-1} \log U) \\
	& = \Oh{\sum\limits_{i = 1}^{\log \hminus} \frac{n^{2.5} \log^2 U}{\eps^{1.5}} \log (n\eps^{-1} \log U) } \\
	& = O(n^{2.5}\varepsilon^{-1.5}\log(n\eps^{-1}\log U)\log (n \eps) \log^2 U).
\end{align*}

The space complexity is dominated by the space used for \textsc{InsertAndSparsify} on a set of size $\sqrt{n/\eps}$ and $\delta = \sqrt{\eps / n}$, that is \Integerspace.

\bibliographystyle{plainurl}

\end{document}